\documentclass[lettersize,journal]{IEEEtran}
\usepackage{colortbl}
\usepackage{bm}

\usepackage{graphicx}  
\usepackage{url}       

\usepackage{amsmath}
\allowdisplaybreaks[4]
\usepackage{cite}

\usepackage{amsfonts,amssymb}
\usepackage{empheq}
\usepackage{stfloats}

\usepackage{cases}
\usepackage{algorithm}
\usepackage{multirow}
\usepackage{algorithmic}
\usepackage{epstopdf}
\usepackage{color}
\usepackage{bbm}
\usepackage{mathtools}
\usepackage{relsize}
\usepackage{caption}
\usepackage{subcaption}

\usepackage[square, comma, sort&compress, numbers]{natbib}

\makeatletter
\newcommand{\vvast}{\bBigg@{3.0}}
\newcommand{\vast}{\bBigg@{4}}
\newcommand{\Vast}{\bBigg@{4.5}}
\newcommand{\VVast}{\bBigg@{5}}
\newcommand{\VVVast}{\bBigg@{5.5}}
\newcommand{\VVVVast}{\bBigg@{6}}

\makeatother

\newtheorem{theorem}{{Theorem}}

\newcommand{\ls}[1]
{\dimen0=\fontdimen6\the\font
	\lineskip=#1\dimen0
	\advance\lineskip.5\fontdimen5\the\font
	\advance\lineskip-\dimen0
	\lineskiplimit=.9\lineskip
	\baselineskip=\lineskip
	\advance\baselineskip\dimen0
	\normallineskip\lineskip
	\normallineskiplimit\lineskiplimit
	\normalbaselineskip\baselineskip
	\ignorespaces
}

\begin{document}

	
	\title{\ls{1.0}{Fundamental for Delay and Reliability Guarantees for Emergency UAV}}

	\author{\IEEEauthorblockN{Wenchi Cheng\IEEEauthorrefmark{2}, Jingqing Wang\IEEEauthorrefmark{2},  Zhuohui Yao\IEEEauthorrefmark{2}, and Wei Zhang\IEEEauthorrefmark{3}}~\\[0.2cm]
		\IEEEauthorblockA{\IEEEauthorrefmark{2}State Key Laboratory of Integrated Services Networks\\
			Xidian University, Xi'an, China\\
			\IEEEauthorrefmark{3}School of Electrical Engineering and Telecommunications\\
			The University of New South Wales, Sydney, Australia\\
			E-mail: \{\emph{wccheng@xidian.edu.cn}, \emph{jqwangxd@xidian.edu.cn}, \emph{yaozhuohui@xidian.edu.cn}, \emph{w.zhang@unsw.edu.au}\}}
		
	}
	\maketitle
\begin{abstract}	
To support mission-critical services in emergency scenarios, wireless networks are required to provide stringent guarantees under massive Ultra-Reliable Low-Latency Communications (mURLLC) constraints. Distributed unmanned aerial vehicle (UAV)-based massive multiple-input multiple-output (MIMO) architectures have recently emerged as a promising solution for rapidly deployable emergency communication systems. However, how to fundamentally characterize and guarantee statistical quality-of-service (QoS) for such systems in the finite blocklength regime remains largely unexplored.
To overcome these challenges, in this paper we develop a fundamental analytical framework for delay and reliability bounded QoS guarantees in distributed UAV-based massive MIMO emergency networks under finite blocklength coding (FBC). 
By rigorously modeling the stochastic service process of distributed massive MIMO fading channels, we derive statistical characterizations the delay and error-rate bounded QoS exponents. 
We also establish QoS-driven controlling functions, including the $\epsilon$-effective capacity and the feasible QoS region.
Finally, the obtained simulation results validate and evaluate our developed modeling techniques and asymptotic formulations to support mURLLC.
\end{abstract}

\begin{IEEEkeywords}
	Statistical delay and error-rate bounded QoS, FBC, $\epsilon$-effective capacity, feasible QoS region, emergency UAV.
\end{IEEEkeywords}

	\section{Introduction}\label{sec:intro}
	\IEEEPARstart{N}{ext} generation wireless networks are expected to support a wide range of mission-critical services with extremely diverse and challenging requirements on delay, reliability, availability, etc., such as emergency response, disaster recovery, and autonomous unmanned systems. In such scenarios, communication failures may directly lead to catastrophic consequences, which fundamentally differentiates emergency wireless networks from conventional commercial cellular systems. Ensuring statistical quality-of-services (QoS) under highly uncertain and dynamically evolving environments therefore remains one of the most challenging problems in next-generation wireless communications.
	
	A defining characteristic of emergency scenarios is the partial or complete destruction of conventional terrestrial communication infrastructure, caused by natural disasters, accidents, or deliberate attacks. To rapidly restore connectivity and support time-critical sensing, communication, and control tasks, unmanned aerial vehicles (UAVs)~\cite{9762906} have emerged as a key enabler due to their rapid deployment capability, flexible mobility, and favorable propagation conditions. In particular, distributed UAV networks, where multiple UAVs cooperate to provide communication coverage, offer a resilient and scalable solution for emergency wireless systems~\cite{9919746}.

	However, as an anticipated new and dominating type of time/error-sensitive services over next generation wireless networks, emergency UAV-based networks must simultaneously support \textit{massive Ultra-Reliable Low-Latency Communications} (mURLLC)~\cite{10175166} to quantitatively design and evaluate stringent and diverse QoS performances, while simultaneously supporting massive connectivity, ultra-reliable, low-latency control links across a vast and chaotic disaster zone. The prevailing paradigm of using individual UAVs as simple aerial relays or base stations is fundamentally limited in its ability to meet the extreme demands of next-generation emergency response. This motivates the development of distributed and cell-free massive multiple-input multiple-output (MIMO)~\cite{10041787} architectures formed by UAV swarms, where each UAV is equipped with a small number of antennas and collaboratively serves ground users and unmanned devices. By exploiting distributed spatial diversity and cooperative transmission, such systems can achieve the fundamental benefits of massive MIMO, including channel hardening and favorable propagation~\cite{11003410}, while offering improved robustness against node failures and environmental uncertainties. As a result, distributed UAV-based massive MIMO constitutes a realistic and powerful communication architecture for emergency scenarios requiring massive connectivity and stringent QoS guarantees.	
	
	Beyond connectivity, emergency UAV-based networks are increasingly required to support delay- and reliability-critical traffics, such as real-time video streaming, situational awareness data collection, and coordination of autonomous unmanned systems. In these applications, delay and reliability requirements are tightly coupled, and communication QoS violations may directly compromise system stability. Therefore, understanding the fundamental delay and reliability limits of distributed UAV-based massive MIMO systems is essential for determining the feasibility boundaries of emergency wireless networks and for enabling feasible and efficient functionalities.

	Despite extensive research on statistical QoS theory~\cite{10945892,10879302} over massive MIMO systems, existing studies primarily focus on investigating QoS metrics and controlling functions in terms of the delay-bound violating probability under the assumption of infinite blocklength coding. 
	Such metrics are inadequate for emergency and mission-critical applications, where packets are short, decoding errors are non-negligible, and statistical guarantees on delay and reliability are required. 
	In particular, the stochastic behavior of distributed massive MIMO fading channels using finite blocklength coding (FBC)~\cite{yury2010,Yp2014}, and its impact on statistically bounded \textit{delay} and \textit{error-rate} QoS provisioning, remains insufficiently understood.
	Thus, it is crucial to identify and define the new fundamental statistical QoS metrics and characterizing their analytical relationships and connections, such as various \textit{QoS-exponent functions}, delay bound-violation probability, decoding-error probability, effective capacity, etc., especially considering distributed UAV-based massive MIMO models.

	To effectively overcome these challenges, in this paper we develop a fundamental analytical framework for statistical delay- and error-rate-bounded QoS guarantees in distributed UAV-based massive MIMO systems using FBC. 
	In particular, we develop FBC-based system models, including the wireless communication model and channel coding rate model.
	Then, we introduce novel modeling analytical frameworks for the fundamental asymptotics of the decoding error probability for our developed statistical delay and error-rate bounded QoS provisioning theory.
	Moreover, we establish the new QoS-driven $\epsilon$-effective capacity function and the feasible QoS region.
	Finally, we conduct the numerical analyses which validate and evaluate our developed modeling formulations in the finite blocklength regime.
	
The rest of this paper is organized as follows: Section~\ref{sec:sys} establishes FBC-based wireless networking models.
	Section~\ref{sec:qos} identifies and derives new statistical delay and error-rate bounded QoS metrics, the asymptomatic error-rate-QoS exponent, and concavity of the rate–reliability function.
Section~\ref{sec:EC} defines the $\epsilon$-effective capacity and feasible QoS region in the finite blocklength regime.
Section~\ref{sec:results} conducts the simulations and numerical analyses to validate and evaluate the system performances for our developed modeling frameworks.
The paper concludes with Section~\ref{sec:conclusion}.
	
	\section{The System Models}\label{sec:sys}
\begin{figure}
	\centering
	\includegraphics[scale=0.66]{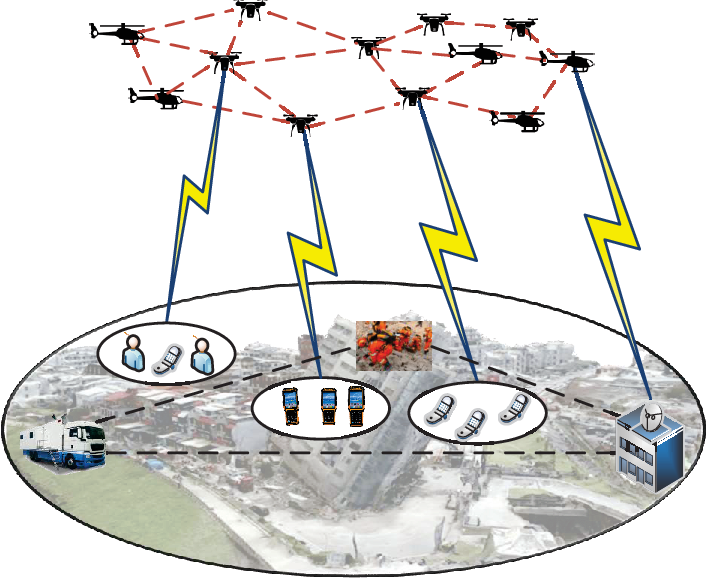}
	\caption{The distributed UAV-based massive MIMO architecture over emergency wireless networks.}
	\label{fig:1}
\end{figure}

We consider an emergency wireless communication scenario in which the terrestrial communication infrastructure is partially or fully unavailable due to natural disasters or unexpected incidents. To rapidly restore connectivity and support mission-critical services with stringent delay and reliability requirements, a swarm of unmanned aerial vehicles (UAVs) is deployed to form a distributed UAV-based massive multiple-input multiple-output (MIMO) network, as illustrated in Fig.~\ref{fig:1}. Each UAV is equipped with a small number of antennas, and the UAV swarm cooperatively serves multiple ground users or unmanned devices over a shared wireless medium.

From a communication-theoretic perspective, the distributed UAV antennas collectively form an equivalent massive MIMO system, in which cooperative transmission enables spatial diversity and array gains analogous to those of conventional centralized massive MIMO architectures. Let $N_{\text{T}}$ denote the total number of transmit antennas aggregated across the UAV swarm, and let $K_{\text{u}}$ denote the number of served ground devices generating delay- and reliability-critical traffic. Each ground device is assumed to be equipped with $N_{\text{R}}$ receive antennas.

\subsection{The Wireless Fading Channel Model}

Motivated by short-packet transmissions in mURLLC systems, we adopt a quasi-static block-fading channel model, under which the wireless channel remains constant over the duration of a codeword and varies independently from one codeword to another. This assumption is standard in finite blocklength analysis and captures the essential characteristics of delay-constrained communications.

Specifically, the channel between the distributed UAV antenna system and the $k$th ground device is modeled as a Rayleigh fading MIMO channel with channel matrix $\mathbf{H}_{k}\in\mathbb{C}^{N_{\text{R}}\times N_{\text{T}}}$. The channel matrix is expressed as
\begin{equation}
	\mathbf{H}_{k}=\sqrt{\xi_{k}}\mathbf{G}_{k}
\end{equation}
where $\xi{k}$ denotes the large-scale fading coefficient capturing the combined effects of path loss and shadowing between the UAV swarm and the $k$th device, and $\mathbf{G}_{k}$ represents the small-scale fading matrix whose entries are independent and identically distributed (i.i.d.) circularly symmetric complex Gaussian random variables with zero mean and unit variance.

Denote by $\mathbf{X}_{k}=\left[\bm{x}^{(1)}_{k},\dots, \bm{x}^{(n)}_{k}\right]\in{\cal X}_{n,N_{\text{T}}}$ and $\mathbf{Y}_{k}=\left[\bm{y}^{(1)}_{k},\dots, \bm{y}^{(n)}_{k}\right]$ the transmitted signal matrix and received signal matrix, respectively, from the distributed UAVs to a ground device, where $n$ is the blocklength and ${\cal X}_{n,N_{\text{T}}}$ is the transmit signal set satisfying the average power constraint.
The corresponding received signal matrix, denoted by $\mathbf{Y}_{k}\in\mathbb{C}^{N_{\text{R}}\times n}$, is given by
\begin{equation}\label{equation01}
	\mathbf{Y}_{k}=\sqrt{\frac{{\cal P}_{k}d_{k}^{-\tau_{k}}}{N_{\text{T}}}}\mathbf{H}_{k}\mathbf{X}_{k}+\mathbf{W}_{k}
\end{equation}
where ${\cal P}_{k}$ denotes the total transmit power allocated to user $k$, $d_{k}$ represents the distance between the UAV swarm and the $k$th ground device, $\tau_{k}$ denotes the path envelope, and $\mathbf{W}_{k}=\left[\bm{w}^{(1)}_{k},\dots,\bm{w}^{(n)}_{k}\right]$ is an additive white Gaussian noise (AWGN) matrix whose entries are i.i.d. $\mathcal{CN}(0,\sigma{k}^{2})$.
Under this model, the average signal-to-noise ratio (SNR), denoted by SNR$_{k}$, for user $k$ is given by
\begin{equation}\label{equation02a}
	\text{SNR}_{k}=\frac{{\cal P}_{k}d_{k}^{-\tau}}{\sigma_{k}^{2}}
\end{equation}
and the instantaneous channel SNR, denoted by $\gamma_{k}$, can be expressed as
\begin{align}\label{equation02}
	\gamma_{k}=&\frac{\text{SNR}_{k}}{N_{\text{T}}}\mathbf{H}_{k}\left(\mathbf{H}_{k}\right)^{H}
\end{align}
where $(\cdot)^{H}$ is the conjugate transpose.

\subsection{The Derivations for the Maximum Achievable Coding Rate}

The traditional Shannon's second theorem generally requires infinite blocklength for attaining the accurate approximation of the maximum achievable coding rate.
However, as noted above, Shannon's capacity formula cannot be applied when considering the limited bandwidth and \textit{stringent delay-bounded QoS requirements} in supporting mURLLC services in the finite blocklength regime for supporting the next generation mobile wireless networks.
Therefore, we consider an alternative solution by providing \textit{statistical} delay and error-rate bounded QoS guarantees through applying the FBC technique, where the QoS required by mURLLC services can be \textit{statistically} guaranteed with controlled small violation probabilities.
We can derive the \textit{normal approximation} of the \textit{maximum achievable coding rate}, denoted by $R^{*}\left(\gamma_{k}\right)$, between the UAV swarm and the $k$th ground device as follows~\cite{6802432}:
\begin{equation}\label{equation018}
	\epsilon_{k}\approx \mathbb{E}_{\gamma_{k}}\left[ {\cal Q}\left(\frac{C\left(\gamma_{k}\right)-R^{*}\left(\gamma_{k}\right)}{\sqrt{V\left(\gamma_{k}\right)/n}}\right)\right]
\end{equation}
where $\mathbb{E}_{\gamma_{k}}[\cdot]$ is the expectation over the SNR $\gamma_{k}$, ${\cal Q}(\cdot)$ is the \textit{$Q$}-function, $C\left(\gamma_{k}\right)$ and $V\left(\gamma_{k}\right)$ are the \textit{channel capacity} and \textit{channel dispersion}, respectively, which are given by the following equations, respectively:
\begin{equation}
	\begin{cases}	\vspace{1pt}
		C\left(\gamma_{k}\right)=
		\mathbb{E}_{\mathbf{H}_{k}}\left[{\cal I}\left(\mathbf{X}_{k};\mathbf{Y}_{k}|\mathbf{H}_{k}\right)\right]; \\
		V\left(\gamma_{k}\right)=\text{Var}\left[\bm{i}\left(\mathbf{X}_{k};\mathbf{Y}_{k}|\mathbf{H}_{k}\right)\right],
	\end{cases}
\end{equation}
where $\mathbb{E}_{\mathbf{H}_{k}}[\cdot]$ is the expectation operation with respect to $\mathbf{H}$ and $\text{Var}[\cdot]$ is the variance operation.
The normal approximation of the maximum achievable coding rate $R^{*}\left(\gamma_{k}\right)$ is obtained as the solution of Eq.~\eqref{equation018}.
We can derive the channel capacity $C\left(\gamma_{k}\right)$ as follows~\cite{telatar1999capacity}:
\begin{equation}
	C\left(\gamma_{k}\right)= \mathbb{E}_{\mathbf{H}_{k}}\left[\log_{2}\left[\det\left(\mathbf{I}_{N_\text{{R}}}+\frac{\text{SNR}_{k}}{N_{\text{T}}}\mathbf{H}_{k}\left(\mathbf{H}_{k}\right)^{H}\right)\right]\right].
\end{equation}
In high-end SNR region, the channel capacity, denoted by $C_{\text{High}}\left(\gamma_{k}\right)$, can be derived as follows:
\begin{align}\label{equation017}
	C_{\text{High}}\!\left(\gamma_{k}\right)\!=\!&\min\left\{N_\text{{T}},N_\text{{R}}\!\right\}\log_{2}\!\left(\!\frac{\text{SNR}_{k}}{N_{\text{T}}}\right)\!+\!\!\!\!\!\!\!\!\sum_{i=|N_\text{{T}}-N_\text{{R}}|+1}^{\max\left\{N_\text{{T}},N_\text{{R}}\right\}}\!\!\!\!\!\!\mathbb{E}\left[\log_{2}\left(\chi_{2i}^{2}\right)\right]\nonumber\\
	&+{\cal O}(1)
\end{align}
where $\chi_{2i}^{2}$ is the chi-square distribution with $2i$ degrees of freedom and ${\cal O}(\cdot)$ is the big O function.
The above normal approximation given in Eq.~\eqref{equation018} has been shown to be very tight for sufficiently large values of $n$~\cite{6802432}.

\section{The Fundamentals of Delay and Reliability QoS Guarantees}\label{sec:qos}

\subsection{The Statistical Delay and Error-Rate Bounded QoS Exponents} 

\subsubsection{The Statistical Delay-QoS Exponent}

\textit{Definition 1:} Based on the \textit{large deviations principle} (LDP), under sufficient conditions, the queueing process converges in distribution to a random variable $Q_{k}(\infty)$ such that
\begin{equation}\label{equation23}
	-\lim_{Q_{k,\text{th}}\rightarrow\infty}\frac{\log\left(\text{Pr}
		\left\{Q_{k}(\infty)>Q_{k,\text{th}}\right\}\right)}{Q_{k,\text{th}}}=\theta_{k}
\end{equation}
where $Q_{k,\text{th}}$ represents the overflow threshold for the queueing systems and $\theta_{k}$ $(\theta_{k}>0)$ is defined as the \textit{delay-QoS exponent} of queuing delay, which measures the exponential decay rate of the delay-bounded QoS violation probabilities. 		
$\hfill\blacksquare$

\subsubsection{The Error-Rate-QoS Exponent}

\textit{Definition 2:} Based on the LDP, the \textit{error-rate-QoS exponent}, denoted by $\vartheta_{k}$, measures the \textit{exponential decay rate} of the \textit{error-rate bounded QoS violation probabilities}, i.e., the decoding error probability $\epsilon_{k}$, which is defined as follows~\cite{gallager1968information}:
\begin{equation}\label{equation38}
	\vartheta_{k}\triangleq\lim\limits_{n\rightarrow \infty}-\frac{1}{n}\log(\epsilon_{k}),
\end{equation}
when the coding rate falls below the channel capacity, i.e. $R^{*}\left(\gamma_{k}\right)<C(\gamma_{k})$, where $R^{*}\left(\gamma_{k}\right)$ is given by Eq.~\eqref{equation018}.
Correspondingly, for a given coding rate $R^{*}\left(\gamma_{k}\right)<C(\gamma_{k})$, the decoding error probability $\epsilon_{k}$ vanishes exponentially as the packet length $n$ tends to infinity~\cite{gallager1968information,1055131}, i.e.,
\begin{equation}\label{equation039}
	\text{Pr}\left\{\widehat{W}_{k}\neq W_{k}\right\}\!=\!\epsilon_{k}\!\leq\!\exp\left\{-n \vartheta_{k}\right\}.
\end{equation}
$\hfill\blacksquare$

Based on the definition in~\cite{gallager1968information}, the \textit{error-rate-QoS exponent} $\vartheta_{k}$ is derived as follows:
\begin{equation}\label{equation040}
	\vartheta_{k}=\sup_{ \rho\in[0,1]}\left\{E_{k}(\rho)-\rho R^{*}\left(\gamma_{k}\right)\right\}
\end{equation}
where $\rho$ represents the Lagrange multiplier parameter, which is a real-valued number satisfying $\rho\in[0,1]$ and
\begin{align}\label{equation041}
	E_{k}(\rho)\!\triangleq& -\frac{1}{n}\log\! \bigg\{\mathbb{E}_{\gamma_{k}}\!\bigg[\int_{\mathbf{Y}_{k}}\!\bigg[\int_{\mathbf{X}_{k}}\!\!P_{\mathbf{X}_{k}}\left(\mathbf{X}_{k}\right)\! \nonumber\\
	&\quad \!\times\! \left[P_{\mathbf{Y}_{k}|\mathbf{X}_{k},\mathbf{H}_{k}}\!\left(\mathbf{Y}_{k}|\mathbf{X}_{k},\mathbf{H}_{k}\right)\!\right]^{\frac{1}{(1+\rho)}} \!\!d\mathbf{X}_{k}\bigg]^{(1+\rho)} d\mathbf{Y}_{k}\!\bigg]\!\bigg\}
\end{align}
where $P_{\mathbf{X}_{k}}\left(\mathbf{X}_{k}\right)$ is the PDF of the transmitted signal vector $\mathbf{X}_{k}$ between the UAV swarm and the $k$th ground device.
As a result, we need to derive the error-rate-QoS exponent function defined by Eq.~\eqref{equation040} for our developed FBC-based schemes for modeling and analyzing the error-rate bounded QoS metrics.
By substituting the conditional PDF $P_{\mathbf{Y}_{k}|\mathbf{X}_{k},\mathbf{H}_{k}}\left(\mathbf{Y}_{k}|\mathbf{X}_{k},\mathbf{H}_{k}\right)$ into Eq.~\eqref{equation041}, we can obtain the following equation~\cite{Gallager2009}:
\begin{align}\label{equation047}
	E_{k}(\rho)\!=\!-\!\frac{1}{n}\log\! \left\{\mathbb{E}_{\mathbf{H}_{k}}\!\left[\!\det\!\left(\!\mathbf{I}_{N_\text{{R}}}\!+\!\frac{\text{SNR}_{k}}{N_\text{{T}}(1+\rho)}\mathbf{H}_{k}\left(\mathbf{H}_{k}\right)^{H}\right)^{-n\rho}\right]\!\right\}.
\end{align}
Then, plugging Eq.~\eqref{equation047} into Eq.~\eqref{equation040}, we can derive the error-rate-QoS exponent $\vartheta_{k}$ as follows:
\begin{align}\label{equation046}
	\vartheta_{k}\!=&\!\sup_{ \rho\in[0,1]}\!\bigg\{\!-\frac{1}{n}\!\log\! \bigg\{\mathbb{E}_{\mathbf{H}_{k}}\!\bigg[\!\det\bigg(\mathbf{I}_{N_\text{{R}}}\!+\!\frac{\text{SNR}_{k}}{N_\text{{T}}(1\!+\!\rho)}
	\nonumber\\
	&\times \mathbf{H}_{k}\left(\mathbf{H}_{k}\right)^{H}\bigg)^{-n\rho}\bigg]\!\bigg\} -\rho R^{*}\left(\gamma_{k}\right)\bigg\}.
\end{align}
Moreover, considering the high-end SNR region, i.e., $\text{SNR}_{k}\rightarrow\infty$, we can rewrite Eq.~\eqref{equation047} as follows:
\begin{align}\label{equation075}
	&	E_{k}(\rho)=-\frac{1}{n}\log \left\{\mathbb{E}_{\mathbf{H}_{k}}\left[\det\left(\frac{\text{SNR}_{k}}{N_\text{{T}}(1+\rho)}\mathbf{H}_{k}\left(\mathbf{H}_{k}\right)^{H}\right)^{-n\rho}\right]\right\}
	\nonumber\\
	&\,\,=\!-\rho N_\text{{R}}\log(1\!+\!\rho)\!-\!\frac{1}{n}\!\log \!\left\{\!\mathbb{E}_{\lambda_{i}}\!\!\left[\!\left(\!\!	\prod_{i=1}^{\min\left\{N_\text{{T}},N_\text{{R}}\right\}}\!\!\!\lambda_{i}\text{SNR}_{k}\!\right)^{-n\rho\,}\!\right]\!\right\}\!.
\end{align}

Denote by $\rho^{*}$ the optimal Lagrange multiplier parameter that maximizes the error-rate-QoS exponent $\vartheta_{k}$.
The most challenging part in deriving the closed-form expression for the error-rate-QoS exponent $\vartheta_{k}$ defined in Eq.~\eqref{equation040} lies in solving for the closed-form solutions for $E_{k}(\rho)$ specified by Eq.~\eqref{equation047} and finding the optimizing value of $\rho^{*}$ that maximizes the error-rate-QoS exponent over massive MIMO based wireless fading channels.
Towards this end, we can obtain the error-rate-QoS exponent by deriving the approximate closed-form solution of $E_{k}(\rho)$ specified by Eq.~\eqref{equation047} in the high-end SNR region, which is summarized in the following theorem.

\begin{theorem}\label{theorem04}
	The \textit{error-rate-QoS exponent} $\vartheta_{k}$ in the high-end SNR region is approximately determined as follows:
	\begin{align}\label{theorem04_eq01}
		\vartheta_{k}
		&\!\approx\! \frac{\left[C(\gamma_{k})-R^{*}(\gamma_{k})\right]^{2}}{2\!\left\{2N_{\text{R}}\!+\!n\text{Var}_{\lambda_{i}}\left[\log\left(\prod\limits_{i=1}^{\min\left\{N_\text{{T}},N_\text{{R}}\right\}}\lambda_{i}\text{SNR}_{k}\right)\right]\right\}}.
	\end{align}
\end{theorem}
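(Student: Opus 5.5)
Starting from the high-SNR form of $E_k(\rho)$ in Eq.~\eqref{equation075}, I would obtain a quadratic (second-order) approximation of $E_k(\rho)$ around $\rho=0$. Then I would solve the concave maximization in Eq.~\eqref{equation040} in closed form. This is the standard route by which Gallager's random-coding exponent reduces to the "$(C-R)^2/(2V)$" form near capacity.

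\textbf{Step 1: write $E_k(\rho)$ through a cumulant generating function.} Set $Z\triangleq\log\big(\prod_{i=1}^{\min\{N_{\text{T}},N_{\text{R}}\}}\lambda_i\text{SNR}_k\big)$. Then Eq.~\eqref{equation075} reads
\begin{equation*}
E_k(\rho)=-\rho N_{\text{R}}\log(1+\rho)-\frac{1}{n}\log\mathbb{E}\big[e^{-n\rho Z}\big]=-\rho N_{\text{R}}\log(1+\rho)-\frac{1}{n}\Lambda_Z(-n\rho),
\end{equation*}
where $\Lambda_Z$ is the cumulant generating function of $Z$. Expanding to second order in $\rho$ gives:
\begin{itemize}
\item $\Lambda_Z(-n\rho)\approx -n\rho\,\mathbb{E}[Z]+\tfrac{1}{2}n^2\rho^2\text{Var}[Z]$;
\item $\rho N_{\text{R}}\log(1+\rho)\approx N_{\text{R}}\rho^2$.
\end{itemize}
Together these yield
\begin{equation*}
E_k(\rho)\approx \rho\,\mathbb{E}[Z]-\frac{\rho^2}{2}\big\{2N_{\text{R}}+n\text{Var}[Z]\big\}.
\end{equation*}

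\textbf{Step 2: identify the linear coefficient with the capacity.} By the high-SNR capacity expression Eq.~\eqref{equation017}, $\mathbb{E}[Z]$ coincides with $C_{\text{High}}(\gamma_k)\approx C(\gamma_k)$ up to the log base and the $\mathcal{O}(1)$ constants absorbed in the approximation. This is consistent with the general fact $E_k'(0)=C$ for Gallager's function. Hence
\begin{equation*}
E_k(\rho)-\rho R^*(\gamma_k)\approx \rho\,[C(\gamma_k)-R^*(\gamma_k)]-\frac{\rho^2}{2}\big\{2N_{\text{R}}+n\text{Var}[Z]\big\}.
\end{equation*}

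\textbf{Step 3: maximize over $\rho$.} The quadratic is concave in $\rho$, so the stationary point is
\begin{equation*}
\rho^*=\frac{C-R^*}{2N_{\text{R}}+n\text{Var}[Z]}.
\end{equation*}
For $R^*<C$ close to capacity, which is the regime where the normal approximation Eq.~\eqref{equation018} is tight, this $\rho^*$ lies in $[0,1]$. Substituting $\rho^*$ back gives exactly Eq.~\eqref{theorem04_eq01}.

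\textbf{Main obstacle.} The delicate point is justifying the truncation at second order. The expansion of $\Lambda_Z(-n\rho)$ involves $n\rho$, not $\rho$, so the higher cumulants contribute terms of order $n^{j-1}\rho^j$. These are negligible only if $n\rho^*$ is small, i.e.\ only if $C-R^*$ is small compared with $\text{Var}[Z]$.

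I would handle this as follows:
\begin{itemize}
\item State the theorem's "$\approx$" as a local approximation in the near-capacity regime.
\item Use the concavity of $\Lambda_Z$ to show that the quadratic surrogate bounds the true objective up to a remainder of order $n^2\rho^3\kappa_3(Z)$.
\item Bound that remainder using the fact that the cumulants of $\log$ of Wishart eigenvalue products are finite, being sums of polygamma functions.
\end{itemize}

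A secondary check is that $\rho^*\le 1$. Otherwise the supremum is attained at $\rho=1$ and the expression reduces to the linear (straight-line) part of the exponent instead of the formula in Eq.~\eqref{theorem04_eq01}.
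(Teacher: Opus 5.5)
Your derivation is correct and reproduces Eq.~\eqref{theorem04_eq01} exactly. Eq.~\eqref{equation075} gives $E_k'(0)=\mathbb{E}[Z]$ and $E_k''(0)=-\{2N_{\text{R}}+n\,\text{Var}[Z]\}$, and maximizing the resulting concave quadratic yields $(C-R^*)^2/(2|E_k''(0)|)$. The paper omits its proof, so there is nothing to compare against directly. However, the shape of the formula makes clear that this second-order expansion about $\rho=0$ is the intended argument: the $2N_{\text{R}}$ comes from $\rho N_{\text{R}}\log(1+\rho)\approx N_{\text{R}}\rho^2$, and the $n\,\text{Var}$ term comes from the second cumulant.

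A few refinements are worth making.

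In Step~2, take $Z\triangleq\log\det(\tfrac{\text{SNR}_k}{N_{\text{T}}}\mathbf{H}_k\mathbf{H}_k^H)$, which is what the first line of Eq.~\eqref{equation075} actually produces. Then $C(\gamma_k)-\mathbb{E}[Z]=\mathbb{E}\big[\log\det\big(\mathbf{I}_{N_{\text{R}}}+(\tfrac{\text{SNR}_k}{N_{\text{T}}}\mathbf{H}_k\mathbf{H}_k^H)^{-1}\big)\big]\to0$ in nats. This holds for $N_{\text{R}}\le N_{\text{T}}$, which Eq.~\eqref{equation075} presupposes anyway. So the gap is $o(1)$ rather than $\mathcal{O}(1)$, and you need that: a nonvanishing shift of the linear coefficient would swamp the small quantity $C-R^*$.

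In your remainder plan, $\Lambda_Z$ is convex; it is $E_k$ that is concave. Also, the Lagrange remainder involves $\Lambda_Z'''(\xi)$ for some $\xi\in(-n\rho,0)$, which is a tilted third cumulant rather than $\kappa_3(Z)$.

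Here $\mathbb{E}[e^{-n\rho Z}]$ is a complex-Wishart negative moment. Up to a deterministic factor it equals $\prod_{i=1}^{N_{\text{R}}}\Gamma(N_{\text{T}}-i+1-n\rho)/\Gamma(N_{\text{T}}-i+1)$. This has three consequences:
\begin{itemize}
\item It is finite only for $n\rho<N_{\text{T}}-N_{\text{R}}+1$, so the expansion must stay inside that window.
\item It gives your polygamma bounds directly, evaluated at the shifted arguments $N_{\text{T}}-i+1+\xi$.
\item The real expansion parameter is $n\rho/(N_{\text{T}}-N_{\text{R}}+1)$ rather than $n\rho$. So for $N_{\text{T}}\gg N_{\text{R}}$ the validity condition is roughly $C-R^*\ll N_{\text{R}}$, much weaker than your $C-R^*\ll\text{Var}[Z]\approx N_{\text{R}}/N_{\text{T}}$.
\end{itemize}
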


\begin{IEEEproof}
	The proof is omitted due to lack of space.	
\end{IEEEproof}

\indent\textit{Remarks on Theorem~\ref{theorem04}:} Theorem~\ref{theorem04} reveals that the error-rate-QoS exponent is \textit{a monotonically decreasing function with respect to the blocklength} $n$, which is consistent with the Definition~5 for the error-rate-QoS exponent $\vartheta_{k}$.
The approximate error-rate-QoS exponent plays an important role in \textit{fundamental-performance-limits} on statistical delay and error-rate bounded QoS.

 Furthermore, based on Eq.~\eqref{theorem04_eq01}, we can derive the error-rate-QoS exponent $\vartheta_{k}$ when the blocklength $n\rightarrow\infty$ as follows:
\begin{align}\label{theorem04_eq02}
	\lim\limits_{n\rightarrow\infty}\vartheta_{k}= 0.
\end{align}
This implies that when the blocklength $n\rightarrow\infty$, the error-rate-QoS exponent $\vartheta_{k}$, i.e., $\vartheta_{k}$, goes to zero, indicating that the system can tackle an arbitrary decoding error probability.

\subsection{Concavity of the Rate–Reliability Function}

\textit{Corollary~2:} For $e^{-n\vartheta_{k}}\in(0,1/2)$, the achievable rate $R^{*}\left(\gamma_{k}\right)$ is a strictly concave function of $\vartheta_{k}>0$.

\begin{IEEEproof}
	The proof is omitted due to lack of space.	
\end{IEEEproof}

\indent\textit{Remarks on Corollary~2:}  This concavity reveals a diminishing-return effect: achieving increasingly stringent reliability QoS requires disproportionately large sacrifices in transmission rate.

For large arguments, the inverse Gaussian tail function satisfies $Q^{-1}(e^{-n\vartheta_{k}})\approx \sqrt{2n\vartheta_{k}}$.
Thus, for sufficiently large $n\vartheta_{k}$, the rate–reliability function admits the following approximation:
\begin{equation}
	R^{*}\left(\gamma_{k}\right)\approx C\left(\gamma_{k}\right)-\sqrt{2V\left(\gamma_{k}\right)\vartheta_{k}}
\end{equation}
which coincides with the inverse of Gallager's random coding error exponent in the vicinity of capacity.

The concavity of the rate–reliability function shows that the marginal rate loss required to increase the error-rate bounded QoS exponent grows unbounded as $\vartheta_{k}$ increases, capturing the inherent inefficiency of ultra-reliable transmission.

\section{The Fundamental of $\epsilon$-Effective Capacity Under Delay and Reliability Guarantees}\label{sec:EC}

\subsection{The $\epsilon$-Effective Capacity for Delay and Error-Rate Bounded QoS}

\textit{Definition 3:} The \textit{$\epsilon$-effective capacity}, denoted by $EC\left(\theta_{k},\vartheta_{k}\right)$, is defined as the maximum constant arrival rate in the finite blocklength regime that a given service process can support the joint delay and error-rate bounded QoS requirements, which is formally expressed as follows:
\begin{align}\label{equation067}
	&EC\left(\theta_{k},\vartheta_{k}\right)\nonumber\\
	&\triangleq 
	\!-\frac{1}{n\theta_{k}}\!\log\bigg\{\!\mathbb{E}_{\gamma_{k}}\bigg[\epsilon_{k}+(1-\epsilon_{k})
	\exp\left\{-n\theta_{k}R^{*}\left(\gamma_{k}\right)\right\}\bigg] \bigg\} \nonumber\\
	&=\!-\frac{1}{n\theta_{k}}\!\log\bigg\{\!\mathbb{E}_{\gamma_{k}}\bigg[e^{-n \vartheta_{k}}\!+\!\left(1\!-\!e^{-n \vartheta_{k}}\!\right)
	e^{-n\theta_{k}R^{*}\left(\gamma_{k}\right)}\bigg]\bigg\}.
\end{align}
$\hfill\blacksquare$

To analyze the fundamental structure of the proposed $\epsilon$-effective capacity, we define the following log-moment generating function, denoted by $\Lambda\left(\theta_{k},\vartheta_{k}\right)$, of the service process:
\begin{equation}
	\Lambda\left(\theta_{k},\vartheta_{k}\right)\triangleq \!\log\bigg\{\!\mathbb{E}_{\gamma_{k}}\bigg[e^{-n \vartheta_{k}}\!+\!\left(1\!-\!e^{-n \vartheta_{k}}\!\right)
	e^{-n\theta_{k}R^{*}\left(\gamma_{k}\right)}\bigg]\bigg\}
\end{equation} 
and establish the fundamental convexity property of $\Lambda\left(\theta_{k},\vartheta_{k}\right)$ in terms of the delay and error-rate bounded QoS exponents.

\begin{theorem}\label{theorem08}
	For $e^{-n\vartheta_{k}}\in(0,1/2)$, the following claims hold for our developed performance modeling formulation in the finite blocklength regime:
	
	\underline{Claim 1.} The log-moment generating function $\Lambda\left(\theta_{k},\vartheta_{k}\right)$ is jointly convex in $\left(\theta_{k},\vartheta_{k}\right)$ over $\theta_{k},\vartheta_{k}>0$ in the finite-SNR region.
	
	\underline{Claim 2.} for a given $\theta_{k}>0$, the $\epsilon$-effective capacity is quasi-concave in $\vartheta_{k}$. 
\end{theorem}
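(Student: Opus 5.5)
The approach is to reduce both claims to properties of a scalar "mixing" function and then handle the coupling between $\theta_{k}$ and $\vartheta_{k}$ through a direct Hessian computation.

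\textbf{Rewriting the integrand.} Set $a\triangleq e^{-n\vartheta_{k}}\in(0,1/2)$ and $b(\gamma_{k})\triangleq e^{-n\theta_{k}R^{*}(\gamma_{k})}\in(0,1)$. The integrand of $\Lambda$ then has the symmetric form
\begin{equation*}
a+(1-a)b=1-(1-a)(1-b).
\end{equation*}
Throughout, $R^{*}(\gamma_{k})$ is treated as the rate--reliability function $R^{*}(\vartheta_{k};\gamma_{k})$. By Corollary~2, for $e^{-n\vartheta_{k}}\in(0,1/2)$ it is strictly concave in $\vartheta_{k}$. It is also decreasing in $\vartheta_{k}$, as seen from $R^{*}\approx C-\sqrt{2V\vartheta_{k}}$.

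\textbf{Step 1 (scalar building block).} Introduce $u=-n\vartheta_{k}$, $v=-n\theta_{k}R^{*}$, and
\begin{equation*}
h(u,v)=\log\bigl(e^{u}+e^{v}-e^{u+v}\bigr).
\end{equation*}
I will verify by explicit differentiation that $h$ is nondecreasing in each argument and that its Hessian is positive semidefinite on $\{u<-\log 2,\ v<0\}$. The constraint $e^{u}<1/2$ enters here: it keeps the negative product term $-e^{u+v}$ dominated by $e^{u}+e^{v}$, so that $h$ behaves like a two-term log-sum-exp. Passing to the expectation over $\gamma_{k}$ uses the fact that $\log\mathbb{E}[e^{h}]$ preserves convexity. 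This follows from H\"older's inequality, exactly as for an ordinary log-MGF.

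\textbf{Step 2 (joint convexity in $(\theta_{k},\vartheta_{k})$).} Here $u$ is linear in $\vartheta_{k}$. However, $v=-n\theta_{k}R^{*}(\vartheta_{k})$ is \emph{not} jointly convex: the Hessian of $\theta_{k}R^{*}$ has determinant $-(\partial R^{*}/\partial\vartheta_{k})^{2}<0$. The generic composition rule therefore fails, and I would instead compute the $2\times2$ Hessian of $\Lambda$ directly. This is the main obstacle.

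The diagonal entries are easy to control:
\begin{itemize}
\item The $\theta_{k}\theta_{k}$ entry is a variance under the tilted measure $\propto(1-a)b$, so it is nonnegative.
\item The $\vartheta_{k}\vartheta_{k}$ entry gains a positive contribution $-n\theta_{k}\,\partial^{2}R^{*}/\partial\vartheta_{k}^{2}$ from Corollary~2. It also has a positive contribution from the convexity of $h$ in $u$.
\end{itemize}
The cross term comes from $\partial v/\partial\vartheta_{k}=-n\theta_{k}\,\partial R^{*}/\partial\vartheta_{k}$. I would bound it by Cauchy--Schwarz under the tilted measure, so that its square is at most the product of the variance parts of the diagonal entries. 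The leftover surplus from $h_{uu}>0$ and $\partial^{2}R^{*}/\partial\vartheta_{k}^{2}<0$ then makes the determinant nonnegative.

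If the direct bound turns out not to close, my first fallback is to use the near-capacity form $R^{*}\approx C-\sqrt{2V\vartheta_{k}}$ and check the determinant inequality explicitly. In that form the finite-SNR hypothesis guarantees that $C$ and $V$ are bounded, which keeps the comparison tractable.

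\textbf{Step 3 (Claim 2).} For fixed $\theta_{k}>0$ we have $EC(\theta_{k},\vartheta_{k})=-\Lambda(\theta_{k},\vartheta_{k})/(n\theta_{k})$. Restricting Claim~1 to the line $\theta_{k}=\text{const}$ shows that $\Lambda$ is convex in $\vartheta_{k}$. Hence $EC$ is concave in $\vartheta_{k}$, and every concave function is quasi-concave, since its superlevel sets are convex intervals. Note that this step only needs convexity of $\Lambda$ in $\vartheta_{k}$ alone, which follows from Step~1 and Corollary~2 without the cross-term analysis. So Claim~2 survives even if Step~2 requires the extra approximation.
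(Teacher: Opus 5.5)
Your Step 1 lemma is false, and the failure is structural. Write $x=e^{u}$, $y=e^{v}$, $f=x+y-xy$. A direct computation gives $h_{uu}=xy(1-y)/f^{2}$, $h_{vv}=xy(1-x)/f^{2}$ and $h_{uv}=-xy/f^{2}$, so $\det\nabla^{2}h=-x^{2}y^{2}/f^{3}<0$ at every point. The restriction $e^{u}<1/2$ plays no role. In fact $h$ is strictly concave along the direction $(1,1)$, since $h_{uu}+2h_{uv}+h_{vv}=-xy(x+y)/f^{2}$. The log-sum-exp analogy points the wrong way: log-sum-exp is only degenerately convex (zero Hessian determinant), so subtracting $e^{u+v}$ makes the Hessian indefinite.

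This failure carries over to $\Lambda$ itself. Hold $R^{*}$ fixed in $\vartheta_{k}$ and take a deterministic rate, which is the fully hardened massive-MIMO limit. Then $\Lambda(\theta_{k},\vartheta_{k})=h(-n\vartheta_{k},-n\theta_{k}R)$ is an invertible linear change of variables of $h$. For example, with $n=R=1$, $\Lambda(t,t)=\log(2e^{-t}-e^{-2t})$ has second derivative $-2e^{-3t}/(2e^{-t}-e^{-2t})^{2}<0$ for every $t>\log 2$. With fading (still $R^{*}$ fixed in $\vartheta_{k}$), let $\tilde P$ be the law of $\gamma_{k}$ tilted by $e^{-n\theta_{k}R^{*}}$ and $y=\mathbb{E}[e^{-n\theta_{k}R^{*}}]$. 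Then $\det\nabla^{2}\Lambda$ has the sign of $(1-x)(1-y)\mathrm{Var}_{\tilde P}(R^{*})-x(\mathbb{E}_{\tilde P}R^{*})^{2}$, which is negative as soon as the channel hardens.

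This is also why your Cauchy--Schwarz step cannot close. $\nabla^{2}\Lambda$ is the tilted covariance of the per-realization gradients, which Cauchy--Schwarz does control, plus the tilted average of the per-realization Hessians. In the fixed-$R^{*}$ case each of those Hessians is indefinite, and its off-diagonal entry contains the non-centred term $n^{2}R^{*}h_{uv}$, which no variance bound absorbs.

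The $\vartheta_{k}$-dependence of $R^{*}$ does not rescue Claim 1 either. Corollary 2 is only qualitative, so the surplus $-n\theta_{k}\,\partial^{2}R^{*}/\partial\vartheta_{k}^{2}$ can be arbitrarily small while the negative contribution $n^{4}(R^{*})^{2}\det\nabla^{2}h$ does not shrink.

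Worse, under the paper's normal approximation $R^{*}=C-\sqrt{V/n}\,Q^{-1}(e^{-n\vartheta_{k}})$, the map $t\mapsto Q^{-1}(e^{-t})$ is concave: its derivative $Q/\phi$ is the reciprocal of the increasing Gaussian hazard rate. Hence $R^{*}$ is \emph{convex} in $\vartheta_{k}$. Your fallback $C-\sqrt{2V\vartheta_{k}}$ is convex too, so it contradicts the Corollary 2 input of Step 2 and flips the sign of the curvature term. Concretely, for a deterministic channel with $n=100$, $V=1$, $C=2$, $e^{-n\vartheta_{k}}=0.1$ and $n\theta_{k}R^{*}=1$, the Hessian determinant of $\Lambda$ is negative (about $-3\times 10^{6}$).

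So Claim 1 cannot be established as stated by any route. It needs an extra hypothesis, such as enough rate variability relative to $e^{-n\vartheta_{k}}$, or a quantitative curvature bound on $R^{*}$.

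Claim 2 is in better shape, but not through Step 1. What you actually need is the sign pattern $h_{uu},h_{vv},h_{v}>0$ and $h_{uv}<0$, together with $u'<0<v'$ and $v''\ge 0$ along the curve $\vartheta_{k}\mapsto(u,v)$. Then each per-realization integrand is convex in $\vartheta_{k}$, and $\log\mathbb{E}[e^{(\cdot)}]$ preserves convexity, so $EC$ is concave, hence quasi-concave, in $\vartheta_{k}$. Since $v''\ge 0$ is exactly the concavity asserted in Corollary 2, this conclusion is only as good as that corollary, which the normal approximation above contradicts.
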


\begin{IEEEproof}
	The proof is omitted due to lack of space.	
\end{IEEEproof}

\indent\textit{Remarks on Theorem~\ref{theorem08}:} The joint convexity structure implies a fundamental Pareto tradeoff between delay and error-rate bounded QoS. Increasing the delay-bounded QoS exponent $\theta_{k}$ penalizes service variability, while increasing the error-rate bounded QoS exponent reduces decoding failures at the cost of transmission rate. 

\subsection{The Convex Feasible QoS Region Under Joint Delay and Reliability Constraints}
According to Theorem~\ref{theorem08}, the log-moment generating function $\Lambda\left(\theta_{k},\vartheta_{k}\right)$ is jointly convex in 
$\left(\theta_{k},\vartheta_{k}\right)$ over $\mathbb{R}^{2}_{+}$. It is a fundamental result of convex analysis that the sublevel set of a convex function is convex.

\textit{Definition 9:} For any constant $u<0$, the feasible QoS region is defined as follows:
\begin{equation}\label{eq081}
	{\cal F}(u)\triangleq \left\{\left(\theta_{k},\vartheta_{k}\right)\in \mathbb{R}^{2}_{+}: \Lambda\left(\theta_{k},\vartheta_{k}\right)\leq u\right\} 
\end{equation}
Then, the feasible region ${\cal F}(u)$ is a convex set.
$\hfill\blacksquare$

From the perspective of $\epsilon$-effective capacity, Eq.~\eqref{eq081} can be rewritten as follows:
\begin{equation}
	EC\left(\theta_{k},\vartheta_{k}\right)\geq  -\frac{u}{n\theta_{k}},
\end{equation}
which characterizes the set of jointly admissible delay and error-rate bounded QoS exponents that can sustain a target effective service rate.
The feasible QoS region ${\cal F}(u)$ consists of all QoS exponent pairs under which the statistical delay violation probability decays at least exponentially with rate $\theta_{k}$, while the decoding error probability decays exponentially with rate $\vartheta_{k}$.
While the joint convexity characterizes the finite-SNR tradeoff between delay and reliability, the structure of the feasible QoS region undergoes fundamental changes across operating regimes.
In addition, the convexity of ${\cal F}(u)$ implies that time-sharing or convex combinations of two feasible QoS operating points remain feasible, enabling efficient joint optimization of delay and reliability constraints.

Then, we analyze the Pareto boundary and delay–reliability tradeoff.
The boundary of feasible QoS region ${\cal F}(u)$, defined by $\Lambda\left(\theta_{k},\vartheta_{k}\right)=u$, constitutes the Pareto-optimal delay–reliability tradeoff frontier. Along this boundary, any increase in the delay-bounded QoS exponent $\theta_{k}$ must be compensated by a reduction in the reliability-QoS exponent $\vartheta_{k}$, and vice versa. The strictly convex shape of this boundary reflects the diminishing returns in simultaneously tightening delay and reliability requirements.
As shown in Fig.~\ref{fig:002}, the shaded area represents jointly admissible delay and error-rate bounded QoS exponent pairs satisfying 
$\Lambda\left(\theta_{k},\vartheta_{k}\right)\leq u$. The solid curve corresponds to the Pareto-optimal boundary $\Lambda\left(\theta_{k},\vartheta_{k}\right)=u$, illustrating the fundamental tradeoff between delay and reliability stringency.

\begin{figure}
	\centering
	\includegraphics[scale=0.33]{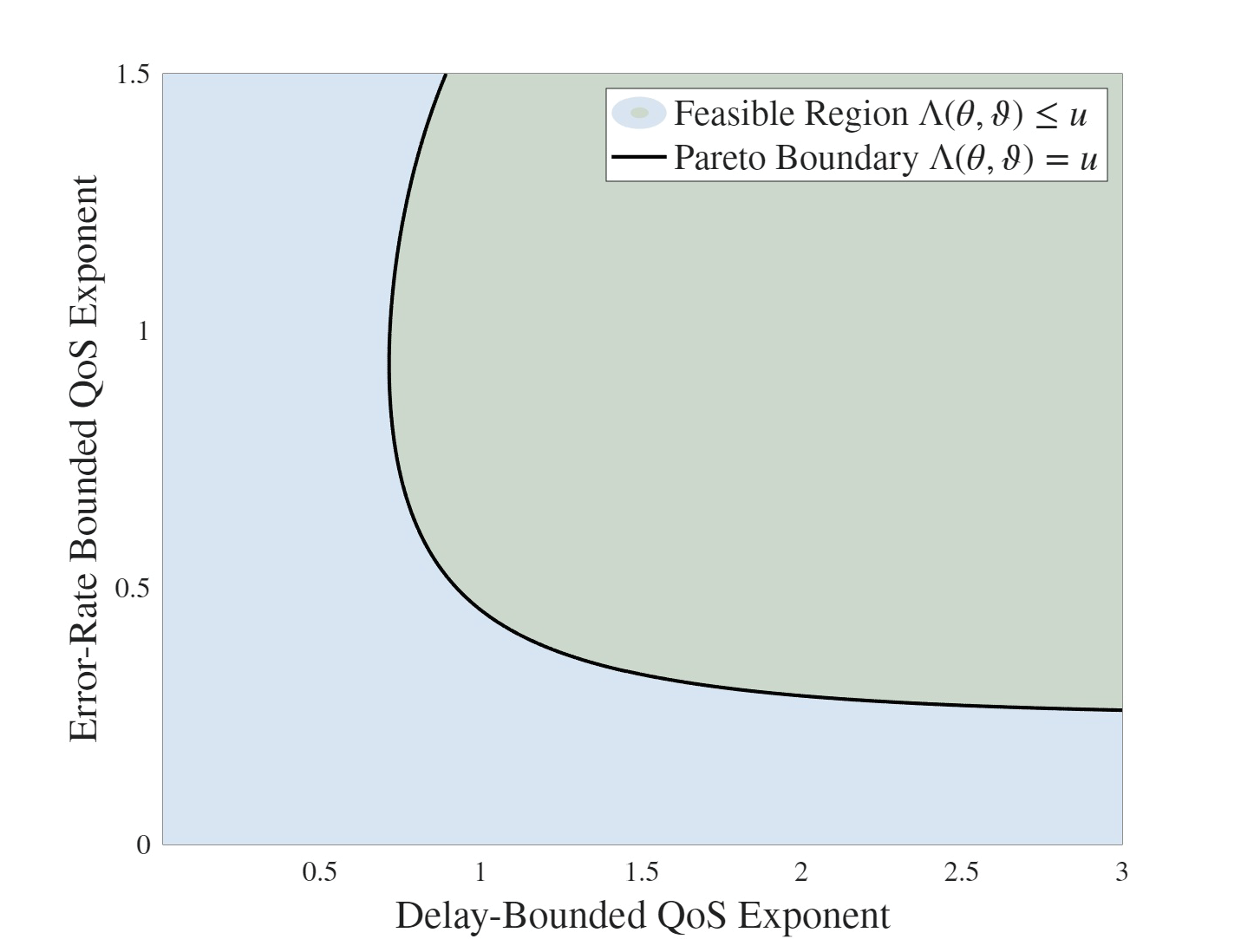}
	\caption{The convex feasible QoS region and Pareto boundary.}
	\label{fig:002}
\end{figure}

\subsection{Massive-MIMO Limit of the $\epsilon$-Effective Capacity}
In the high-SNR regime, the feasible QoS region expands monotonically due to the logarithmic growth of the achievable rate. However, the convexity of the feasible region and the Pareto-optimal delay–reliability boundary is preserved, indicating that higher SNR enlarges the admissible QoS space without altering its fundamental geometric structure.

\textit{Theorem~3:} In the high-SNR regime with a fixed number of transmit and receive antennas, \textbf{if} the achievable coding rate satisfies $R^{*}\left(\gamma_{k}\right) \propto\log\left(\text{SNR}_{k}\right)$, \textbf{then} for any fixed $\theta_{k}>0$, the log-moment generating function $\Lambda\left(\theta_{k},\vartheta_{k}\right)$ satisfies:
\begin{equation}
	\Lambda\left(\theta_{k},\vartheta_{k}\right)=-n\vartheta_{k}+{\cal O}(1), \qquad \text{SNR}\rightarrow\infty
\end{equation}
and the feasible QoS region asymptotically reduces to
\begin{equation}\label{eq082}
	{\cal F}(u)= \left\{\left(\theta_{k},\vartheta_{k}\right): \vartheta_{k}\leq -u/n\right\}.
\end{equation}

\begin{IEEEproof}
	The proof is omitted due to lack of space.	
\end{IEEEproof}

\indent\textit{Remarks on Theorem~3:}  Theorem~3 shows that the reliability dominates the service process, and the delay constraint becomes asymptotically non-binding.

\section{Performance Evaluations}\label{sec:results}

We provide numerical results to validate and evaluate our developed performance modelings over distributed UAV-based massive MIMO networks under statistical delay and reliability bounded QoS.
Consider a swarm of UAVs cooperatively serves a ground node in a cell-free manner. The cooperating UAVs are equipped with a single antenna and deployed at a fixed altitude. 
Throughout our simulations, we set the noise power $\sigma_{k}^{2}=0.1$ and the packet size of each data packet $M_{k,N_{\text{T}}}=10^{10}$ bits.

\begin{figure}
	\centering
	\includegraphics[scale=0.37]{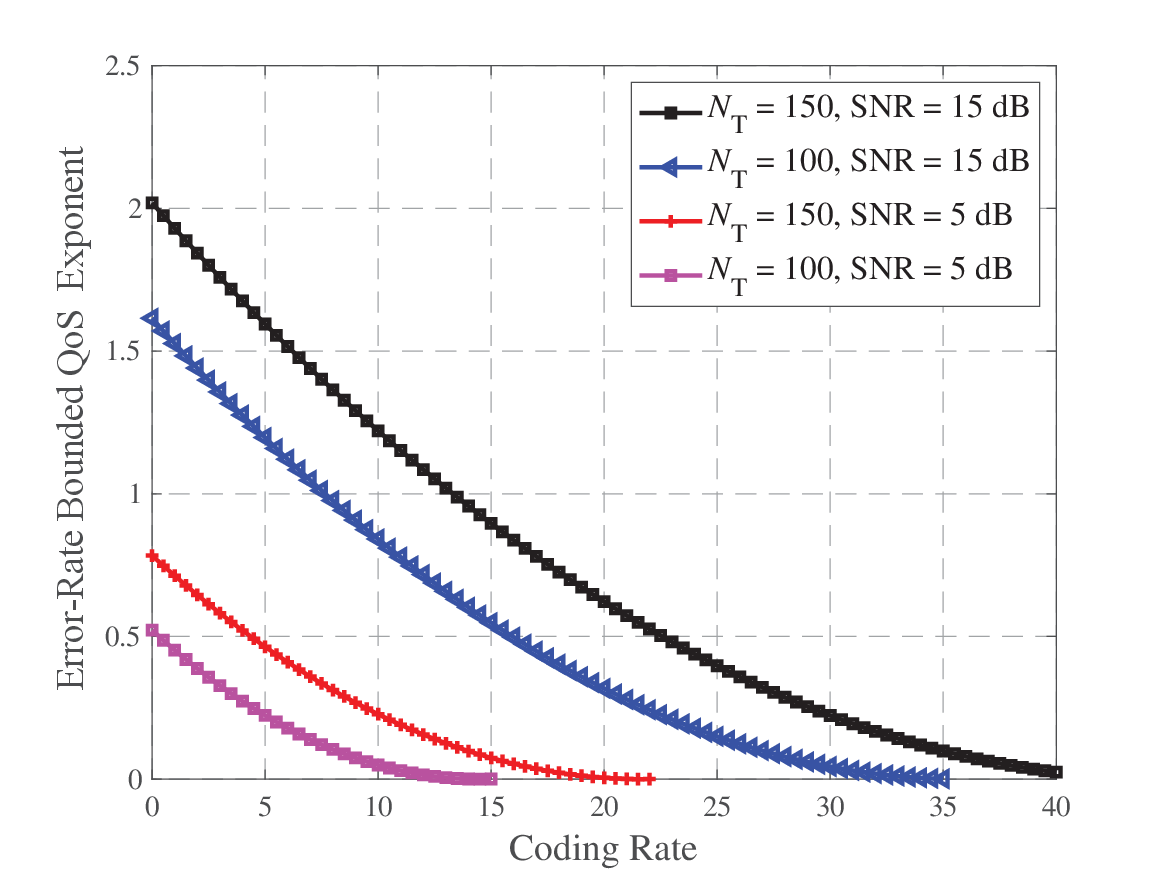}
	\caption{The error-rate-QoS exponent $\vartheta_{k}$ against coding rate $R\left(\gamma_{k}\right)$.}
	\label{fig:05}
\end{figure}

Figure~\ref{fig:05} plots the error-rate-QoS exponent $\vartheta_{k}$ as a function of the coding rate $R\left(\gamma_{k}\right)$ for our developed performance modeling formulations using FBC.		
Fig.~\ref{fig:05} shows that the error-rate-QoS exponent $\vartheta_{k}$ increases with the number of transmit antennas $N_{\text{T}}$ and also increases with the SNR.
In addition, Fig.~\ref{fig:05} also shows that the error-rate-QoS exponent $\vartheta_{k}$ decreases as the coding rate  $R\left(\gamma_{k}\right)$ when $R\left(\gamma_{k}\right)<C\left(\gamma_{k}\right)$.

Figure~\ref{fig:08} depicts the fundamental tradeoff between the optimal error-rate bounded QoS exponent and delay-bounded QoS exponent under FBC. Fig.~\ref{fig:08} shows the non-monotonic variation of the optimal $\vartheta_k$ with $\epsilon_{k}$. When $\epsilon_{k}$ is small, the system opts for high reliability, i.e., a large $\vartheta_k$. For moderate values of $\epsilon_{k}$, a moderate reduction in reliability is permitted to enhance the transmission rate. When $\epsilon_{k}$ becomes large, reliability must again be increased to reduce retransmission delays.

Figure~\ref{fig:010} depicts the $\epsilon$-effective capacity $EC\left(\theta_{k},\vartheta_{k}\right)$ as a function of both delay bounded QoS exponent $\theta_{k}$ and error-rate bounded QoS exponent $\vartheta_{k}$ in the finite blocklength regime.
Fig.~\ref{fig:010} shows the quasi-concavity of the $\epsilon$-effective capacity function. For a fixed $\theta_{k}$, the $\epsilon$-effective capacity exhibits a unimodal characteristic with respect to $\vartheta_{k}$, possessing a unique optimal value.
This implies that increasing the delay bounded QoS exponent $\theta_{k}$ generally leads to a reduction in the $\epsilon$-effective capacity. However, this decline can be alleviated to some extent by appropriately adjusting the error-rate bounded QoS exponent $\vartheta_{k}$.
Also, there exists an optimal path on the surface that forms a distinct ridge, which traces out the system's best possible operating points.

\begin{figure}
	\centering
	\includegraphics[scale=0.306]{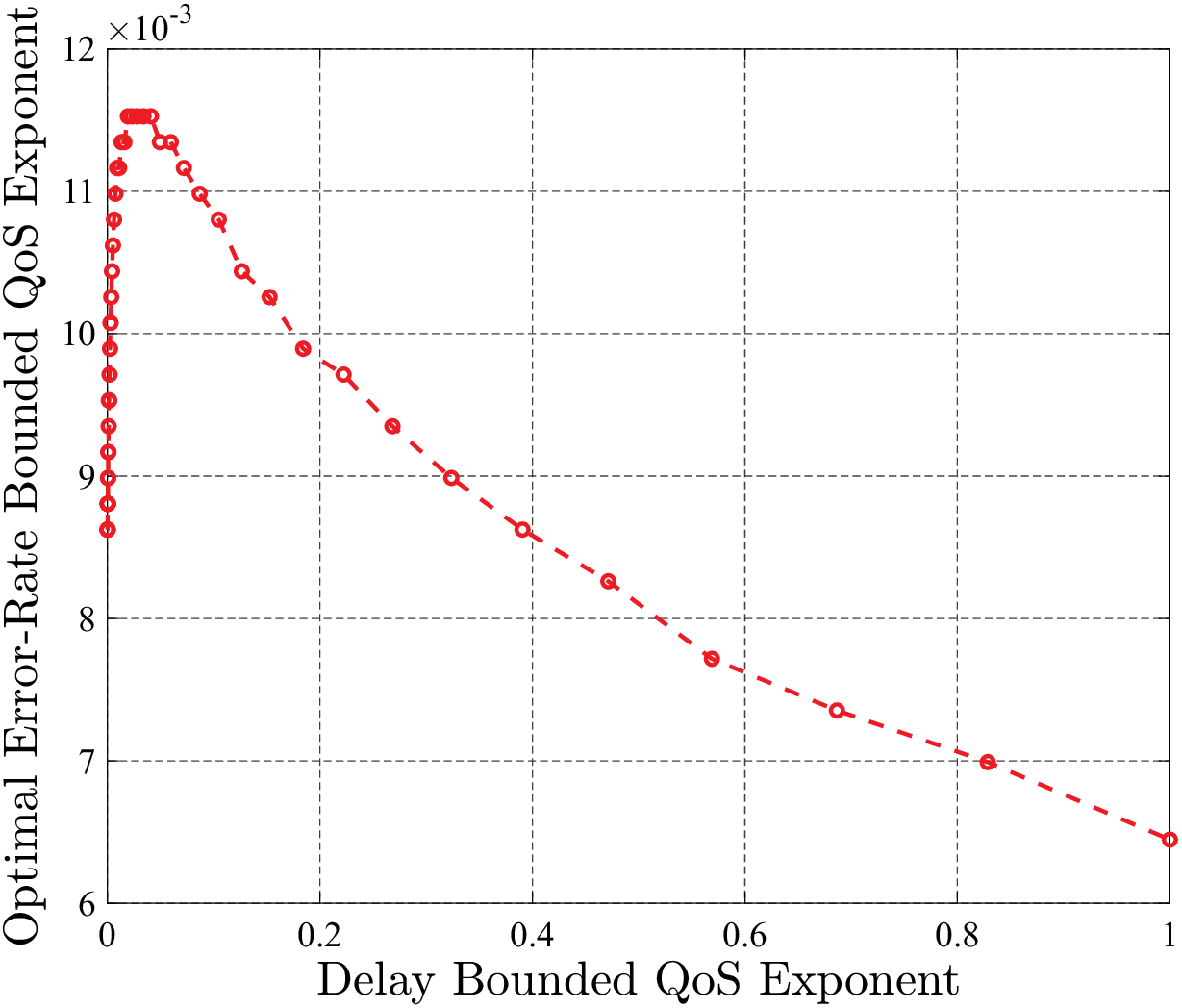}
	\caption{The optimal error-rate bounded QoS exponent against delay-bounded QoS exponent.}
	\label{fig:08}
\end{figure}

\begin{figure}
	\centering
	\includegraphics[scale=0.325]{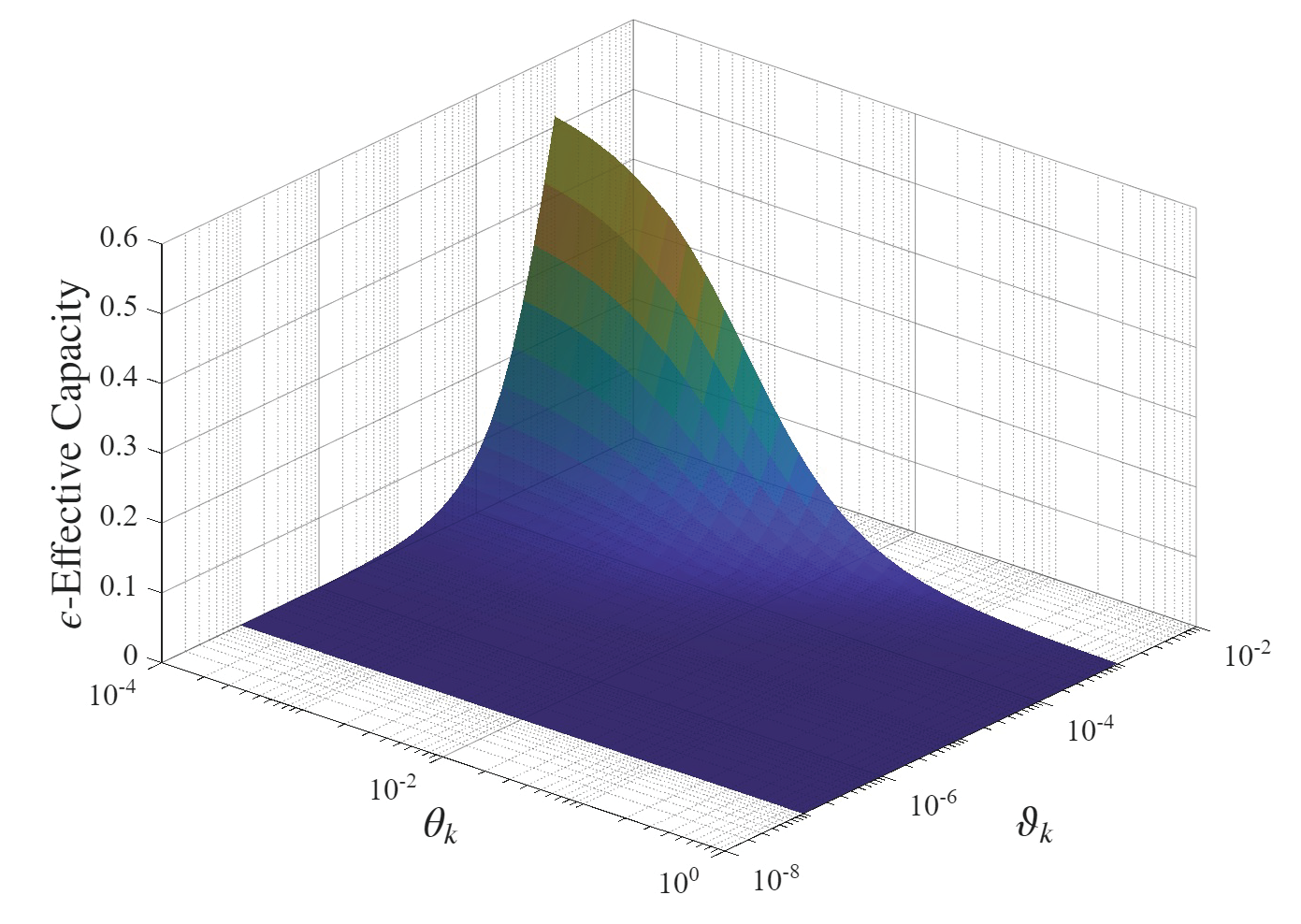}
	\caption{The $\epsilon$-effective capacity against delay bounded QoS exponent $\theta_{k}$ and error-rate bounded QoS exponent $\vartheta_{k}$.}
	\label{fig:010}
\end{figure}

\section{Conclusions}\label{sec:conclusion}
This paper investigated the fundamental limits of joint delay- and reliability-bounded QoS guarantees over distributed UAV-based massive MIMO networks under FBC. 
We developed a set of new statistical delay and error-rate bounded QoS metrics and controlling functions including the diverse {QoS exponents}, $\epsilon$-effective capacity, and the feasible QoS region under FBC.
Numerical results corroborated the theoretical analysis and illustrated the fundamental tradeoffs among blocklength, antenna dimensions, delay guarantees, and reliability constraints. 

		\nocite{*}
		\footnotesize
		\bibliographystyle{IEEEtran}
		\bibliography{myref1.bib}

	\end{document}